%

\documentclass[a4paper, 12pt ,onecolumn]{IEEEtran}

\usepackage[nolists,tablesfirst,nomarkers]{endfloat} 
\usepackage{times, epsfig}
\usepackage{amssymb}
\usepackage{amsmath}
\usepackage{graphicx}
\usepackage[latin1]{inputenc}
\usepackage{graphicx}
\usepackage{setspace}
\usepackage{psfrag}
\usepackage{cite}
\usepackage{multirow}
\usepackage{wasysym}
\usepackage{multirow}
\usepackage{float}
\usepackage{pdfsync}
\usepackage{pstricks}
\newtheorem{theorem}{Theorem}

\newtheorem{claim}{Claim}
\newtheorem{proof}{Proof}
\newtheorem{definition}{Definition}


\begin{document}

\title{Interference Networks: A Complex System View}
\author{Pedro H. J. Nardelli, \IEEEmembership{Student Member, IEEE}, Paulo Cardieri, \IEEEmembership{Member, IEEE} \\ William A. Kretzschmar Jr. and Matti Latva-aho, \IEEEmembership{Senior Member, IEEE}
\thanks{This work was supported in part by Infotech Graduate School,  Finnish Funding Agency for Technology, and Innovation (Tekes), Renesas Mobile, Nokia Siemens Networks, Elektrobit and CNPq-Brazil (Proc. No. 309597/2009-9).

Pedro H. J. Nardelli is with Center for Wireless Communications (CWC), University of Oulu, Finland, and Wireless Technology Laboratory (WissTek), State University of Campinas, Brazil (e-mail:~nardelli@ee.oulu.fi).
Paulo Cardieri is with the WissTek.
William A. Kretzschmar Jr. is with Dept. of English at University of Georgia, USA, and Faculty of Humanities at University of Oulu.
Matti Latva-aho is with CWC.
}}
\maketitle

\begin{abstract}
This paper presents an unusual view of interference wireless networks based on complex system thinking.
To proceed with this analysis, a literature review of the different applications of complex systems is firstly presented to illustrate how such an approach can be used in a wide range of research topics, from economics to linguistics.
Then the problem of quantifying the fundamental limits of wireless systems where the co-channel interference is the main limiting factor is described and hence contextualized in the perspective of complex systems.
Specifically some possible internal and external pressures that the network elements may suffer are identified as, for example, queue stability, maximum packet loss rate and transmit power constraint.
Besides, other important external factors such as mobility and incoming traffic are also pointed out.
As a study case, a decentralized point-to-point interference network is described and several claims about the optimal design setting for different network states and under two mobility conditions, namely quasi-static and highly mobile, are stated based on results found in the literature.
Using these claims as a background, the design of a robust adaptive algorithm that each network element should  run  is  investigated.
\end{abstract}

\begin{keywords}
Adaptive algorithms, complex systems,  decentralization,  interference networks, self-organization
\end{keywords}

\section {Introduction}
\label{Sec_Intro}
Complexity is a term used in several diverse research fields, from theoretic physics to social sciences or ecology, to characterize a state that is neither completely deterministic nor completely random that \emph{emerges} from the dynamics of systems whose different elements interact amongst themselves and may adapt their relation rules in accordance to both internal and external factors \cite{Mitchell2009}.
As an example of how very simple rules may lead to unexpected intricate patterns over time, we can cite the extensive work about one-dimensional cellular automata done by Wolfram in \cite{Wolfram2002}.
There, the author plots the state of one-dimensional automata at each time period considered, creating then a two-dimensional figure that can be categorized in the following classes: stable, random, periodic and complex.
Without going into further details, what Wolfram's work tells us is that, even using a very simple spatial interrelation rule, systems \emph{may} generate complexity by \emph{themselves}.   
As this fact suggests, decentralized systems might be still functional even without any controlling entity, indicating that they are able to \emph{self-organize}.
Many illustrations of this can be found in nature as, for instance, ants working in colonies, neurons building a capable brain and so on \cite{Mitchell2009}. 
Different from these solutions offered by the \emph{invisible hand} of nature, however,  engineering systems designed by humans almost always do not have a long time to evolve self-organization in a proper way and also do not accept outputs without a minimum quality requirement.
For this reason, designed solutions that can both produce and sustain complex behaviors are usually a very hard task.
In addition, the conditions of controllability in complex network are still a new research field, which starts developing very recently \cite{Liu2011}.
Yet the development of practical engineering systems that allow for adaptation and self-organization is constantly increasing.

In this work, we focus our analysis on the fundamental limits of wireless, interference-limited networks from a complex-system thinking.
Our goal here is to identify the features  that the network elements should have for a given broad class of scenarios (i.e. static or mobile, dense or sparsely populated, etc.).
Once the capabilities of these elements are determined, our target is to study how their parameters should be \emph{adaptively} set as a function of the \emph{locally processed} external and internal state information in order to maintain the network functional or even optimize its efficiency, while its elements are still satisfying their own quality requirements.

Before we go deeper, it is worth mentioning that extensive work has been done to either characterize the capacity of communication networks  using information theory  \cite{ElGamal_2012} or design practical algorithms with self-organization capabilities for cellular networks \cite{Aliu2012}; however, very few have been done to combine both approaches.
Our aim here is to present how complexity science can put some light on the interference network problem, providing a more systemic view to it, while we maintain the rigor required by information theorists.
It is worth saying that this new way of seeing science is also changing how researchers from different fields cope with several problems where the traditional reductionist approach is not able to provide satisfactory answers as discussed in \cite{Barabasi2012}.

Next we present an informal description of the scenario under analysis, which will help the reader to get a better understanding of the problem we are dealing with.
%

\subsection{Informal statement: chatting in a party problem}
\label{SubSec_Informal}
In this subsection we introduce the problem of people talking at a party as an illustration of the interference network that we will work on later.
Our goal with this informal statement is to provide some intuition on the problem and show how we, \emph{intelligent humans}, attempt to cope with it, imagining some possible decisions and their effects on the network.
For a clear parallel, we explicit whenever we believe appropriate which communication engineering aspect is mimic by our informal statement.

Let us consider that a couple goes to a party.
When they arrive, there are only few people around talking to each other.
In this situation, our reference couple can successfully have a chat; the others are talking in an acceptable intensity (transmit power is limited); they are most probably far from each other (random spatially distributed) and the background music (noise) is the main limiting factor of the conversation (noise-limited scenario).
After an hour, however, more people have arrived and thus more people are chatting, increasing the \emph{interference} level throughout the party place.
Moreover, persons are getting closer and closer to our reference couple, which consequently starts facing problems to communicate. 
The others start suffering the same problem as well.
What should each person do to improve his/her own performance that is affected by external factors? 
If everyone does the same, is the network still functional or, in other words, are people able to chat?

A straightforward decision when the interference from people in concurrent conversations is disturbing the couple under analysis is to speak louder (power control). 
This is in fact an optimal solution for a single couple; yet, everyone speaking louder harms the performance of the network and in the end of the day it is completely useless.
This is easy to visualize in parties and restaurants when everyone is screaming when talking.
So, even if it is optimal for one pair for a given fixed condition of the network, this is not a good decision for the network as whole because other people will also take the same decision of speaking louder. 
What else can be done then?

Another possible solution is to provide feedback regarding the success of what has been said (Automatic Repeat-reQuest, ARQ, protocol); if the message was not clearly understood by the listener, he would inform the other who will repeat whatever he said before.
This would work, but allowing for many repetitions would be inefficient since a successful communication might require many trials.
Even worse, if the speaker has a lot to say (arrival process) and stays repeating the same thing for long periods, he would probably forget something (buffer overflow, unstable queues).

Another possibility is to control when people talk (Multiple Access Control, MAC, protocols).
For example, a synchronization feature in the network could help so that people have time-slots to communicate.
Our reference couple can decide randomly whether to communicate in the beginning of a given slot (slotted-ALOHA MAC protocol).
Or in the absence of synchronization a random waiting time can be considered such that the person only starts her transmission after some random period (unslotted-ALOHA MAC protocol).
Other option is to wait for the interference level be in a satisfactory level to then start chatting (Carrier Sense Multiple Access, CSMA, based MAC protocol).
These mechanisms can help the communication, but it still faces issues related to long periods before starting a transmission, which harms the efficiency of the chatting.
As in the repetition strategy, if a person has many things to tell and it is very difficult to communicate successfully, she will probably start forgetting what she wanted to say.
A different strategy that could be used by our reference speaker is to say things slowly (lower coding rates).
This increases the chances of a successful understanding by her listener while it does not affect the others' chatting.
A drawback of this is that when many people are talking at the same time, the speaker should talk very slow, which in turn negatively affect  their communication efficiency.
However, if the listener tries also to understand some of the strongest interfering speakers that are chatting at the same time (joint detection), her actual speaker might talk a bit faster.

We just listed some possible strategies that help people chat in a party when the number of concurrent conversations increases.
As we argued, none of them alone can provide a successful answer for network variations, which also depends on internal pressures such as personal limitations on forgetting things before saying or losing information after some communication attempts.
Nevertheless, if we think how humans react in a real situation, they use \emph{combinations} of the possible strategies and they are normally able to talk.
Which strategies are employed and how to apply them are abilities \emph{learned} and \emph{acquired} from previous \emph{experiences}.
Furthermore, to be able to do so, every single person should have knowledge from the relevant information around them.
Therefore, the \emph{sensing} and \emph{processing} capabilities of humans are necessary conditions to successful chat in a party when people arrive and leave all the time.

In this paper, we mathematically characterize examples of this problem using concepts from different fields as follows: (i) stochastic geometry to determine the spatial distributions of the network, (ii) Shannon information theory to assess data rates and possible decoders, (iii) communication network theory to study access control and retransmissions protocols, (iv) signal processing to identify and process the relevant information of the network state, (v) game theory to compare the selfish and collective optimal solutions, (vi) queuing theory to evaluate the internal pressure related to the stability of the solution, (vii) theoretical biology to have a better understanding of how a good solution evolves from internal and external pressures and (viii) social sciences and institutional design to provide a view of how rules and incentives should be created to achieve a given desired behavior.

The rest of this paper is divided as follows.
In Section \ref{Sec_Related}, we highlight  the main related works concerning the capacity of ad hoc wireless networks and a literature review of some multidisciplinary problems that have a similar appeal and have been analyzed with complex science.
In Section \ref{sec_general_description}, we present a general description of the scenario studied here, including some possible strategies and evaluation metrics.
In Section \ref{sec_complview} we state our claims about more general guidelines and strategies that should be taking into account for different classes of interference networks (e.g. densely or sparsely populated and quasi-static or mobile network elements) and then we discuss how they can be used to design an adaptive distributed algorithm that is robust against variations.
Section \ref{sec_Concl} concludes this paper, indicating the road-map for future works.
%

\section {Related Works}
\label{Sec_Related}
\subsection {Capacity of ad hoc wireless networks}
\label{SubSec_ReviewCapacity}
One of the first attempts to deal with a class of communication channels in which parallel transmissions occur dates back to the late 70's, when Carleial defined from an information-theoretic perspective the so-called \emph{interference channel} \cite{Carleial1978}.
Even though different bounds and strategies have been proposed to better characterize this class of channel, the capacity region of the simplest two-source-two-destination scenarios is still undetermined \cite[Ch. 6]{ElGamal_2012}.
To make things worse, larger networks where multiple sources and destinations coexist have their capacity regions even more unclear.
This fact has stimulated researchers to think about different ways to understand the limits of interference networks composed by several elements affecting each other communication links.
For example, we can cite these two magazine publications \cite{andrews_rethinking_2008,Goldsmith2011} to illustrate some possible research directions.
Another important contribution to characterize the capacity of wireless networks was introduced by Gupta and Kumar in their  seminal paper \cite{Gupta2000}.
In that work, the authors defined the \emph{transport capacity} metric to quantify how many bits-meter a wireless network can reliably sustain when the density of concurrent transmissions grows to infinite (asymptotic analysis).
Following it, many other studies have focused on establishing the capacity scaling laws for different scenarios and under different assumptions (e.g. \cite{Xue2006}).
Using an unconventional  perspective, Franceschetti et al. \cite{Franceschetti2009,Franceschetti2011} derived some fundamental properties of wireless networks relying on established methods of electrodynamic and electromagnetic theories.
Yet, the aforementioned works are strongly based on scenarios where the number of nodes in the network infinitely increases, which might misguide the design of actual strategies for either physical or medium access control network layers.
To cope with this limitation, the \emph{transmission capacity} was proposed by Weber et al. in \cite{WebYan2005} to statistically assess the highest average \emph{spatial throughput}\footnote{Spatial throughput can be also referred to as \emph {area spectral efficiency} \cite{Alouini1999}.} that the network can reach such that the link outage probability is bounded by a given small value and the density of active links is the optimization variable.
Since this first paper published in 2005, different strategies such as interference cancellation, threshold transmissions, guard zones, bandwidth partitioning amongst others have been analyzed using this framework and the main results have been recently compiled in \cite{Weber2012}.
Other important extensions following this line can be found in \cite{Vaze2011,KrishnaGanti2011,Nardelli2011,Nardelli2012}
Here it is worth stressing that this result makes use of stochastic geometry and spatial point processes \cite{Baddeley_spatial_2007} to statistically characterize the node positions over different network spatial realizations.
In fact, the use of these tools to model wireless networks started in the early 80's, when Takagi and Kleinrock firstly introduced such an approach to evaluate the aggregate interference power for Poisson distributed interferering nodes \cite{Takagi1984}.
This idea has been further developed and we can cite \cite{Baccelli2009_1,Baccelli2009_2,Haenggi,Cardieri2010} as basic tutorials on the topic.
Another important result which incorporates both the node positions' characterization based on stochastic geometry and the information-theoretic concept of capacity regions using different decoding rules was presented by Baccelli et al. in \cite{Baccelli2011}.
In that paper, the authors derived the capacity regions of Gaussian point-to-point codes for interference networks and then apply them to Poisson distributed networks.
An extension of that work have been recently proposed by the authors in \cite{Nardelli2012_WCNC_Spatial} so as to characterize the \emph{spatial capacity} of Poisson interference networks considering only the \emph{interference-as-noise} decoding rule.

In those works, though, neither queue analysis nor packet arrival process related to each node are considered, which can hide unstable scenarios.
In communication network theory, the queue stability has been extensively study for more general classes of stochastic networks (refer to \cite{Neely2010}) as well as in basic slotted ALOHA systems \cite{Luo1999}.
In \cite{Stamatiou2010}, Stamatiou and Haenggi started combining the stochastic geometric framework of modeling node positions and queuing theory used to assess the dynamics of node buffers to determine the stability region and average delay of single-hop ad hoc networks. 
This work was further extended in \cite{Nardelli2012_WCNC_Stability,Nardelli2012_submitted_Marios}, where the average spatial throughput is optimized under stability and packet loss  constraints such that the access probability, the number of possible retransmissions and the coding rate are the variables to be jointly tuned.
Other examples that include arrival processes into the spatial analysis, but more focused on comparing different access protocols that allows for asynchronous transmissions, are presented in \cite{kaynia_TWC_2010,Nardelli2012_accepted_Mariam}. 
%

\subsection {Complex systems}
\label{SubSec_ReviewComplex}
Before we start talking about different problems that complex system analyses are employed, we think interesting to present the following quotation from \cite{Page_2010}: ``A complex system consists of diverse entities that interact in a network or contact structure -- a geographic space, a computer network, or a market. 
These entities' actions are interdependent -- what one protein, ant, person, or nation does materially affects others.
In navigating within a complex system, entities follow rules, by which I mean prescriptions for certain behaviors in particular circumstances''.
Besides, the rules just mentioned can be either fixed (e.g. physical laws) or adaptive (e.g. social behaviors).

The first example we introduce here is the \emph{tragedy of the commons} problem, which was described in \cite{Hardin1968}.
This problem can be stated as a social-economic dilemma such that many independent and rational agents share a given pool of limited resources.
In this scenario, the agents optimize their own pay-offs in a selfish manner, i.e. find their global optimum regardless of the others. 
Consequently, if every single agent takes the same decision, the shared resource will fade away after some time.
This conundrum is very context-dependent; for example, both fishing in a lake and forest usage can be  viewed as a tragedy of the commons class of problem, but the solution applied for each case tends to differ as the internal constraints of each system are different.
For this reason, after more than forty years of its publication, how to cope with it is still an open issue as we can see in \cite{Ostrom2007}.  

Another relevant problem about the interplay between coordination and cooperation is the well-known \emph{prisoner dilemma} \cite{Leyton-Brown2008}, in which two rational agents that cannot communicate to each other should choose whether to cooperate or not.
If both cooperate, they get a higher pay-off than if both do not cooperate.
However, if one cooperate and other does not, the non-cooperative agent will obtain a higher pay-off.
This fact leads to both agents not cooperating, which in turn provides lower pay-offs.
Several different studies based on this problem have been proposed under different assumptions and we will not discuss them here but one very interesting work recently proposed by Nowak \cite{Nowak2012}, where the author describe different ways that cooperative behavior can emerge in evolutionary, specially biological and social, systems.

Formation of cultures can be also viewed as a complex systems.
For example, the authors in \cite{Bednar2010} proposed an agent-based modeling to explain the existence of different set of behaviors within and across different populations.
The main idea is that two forces act in each individual, namely internal desire of \emph{consistence} and the social pressure of \emph{conformity}, and these forces will build the attributes of the collection of individuals, defining then their so-called culture.
The reader can refer to \cite{Page_2007} to explore more social models using the complex system approach.

Similarly the field of linguistics is stepping towards a complex understanding of the dynamics of languages.
In \cite{Kretzschmar_2007}, the author proposes an unified vision of speech theory where the language uses are analyzed as a spatial-temporal evolution of a complex system.
From this perspective, some intriguing phenomena such as the language variation used by a person within different groups and the geographical variety of languages can be better understood and explained.
This new view put some light on the role of regulative grammars as well as on the diversity that exists within languages.

In \cite{Jones-Rooy2010}, Jones-Rooy and Page offer a complex system analysis of the \emph{global systems history}.
For us, even more important than the specific focus of such an essay are the arguments used therein, which, we believe, are able provide a very instructive guidance on how the complex system thinking should be applied to visualize and model general phenomena.
Due to space limitation, we will not extend this survey to other important related topics as, for instance, network structures \cite{Barabasi2002,Newman2006} or exploration--exploitation trade-offs \cite{Page_2010}; yet we suggest the readers to go through the aforementioned essay, where such examples and some others are covered in a didactic manner.

To conclude this section, we want to say some few words about the seminal work \cite{Haykin2005}.
In that paper, Haykin stated the main features of the so-called \emph{cognitive radio}, which in his own words  is ``(...) defined as an intelligent wireless communication system that is aware of its environment and uses the methodology of understanding-by-building to learn from the environment and adapt to statistical variations in the input stimuli, with two primary objectives in mind: highly reliable communication whenever and wherever needed; efficient utilization of the radio spectrum''.
This work indicates the direction to more efficient wireless systems, whose designing is clearly related to the fundamentals of complex systems.
In fact, a huge number of papers that apply this concept have been published and such an idea is fairly well established in the academic community as well as in industries and operators.

We now return to our initial argument that, even though the research by-product of the cognitive radio concept has been facing a constant development and many aspects of complex systems have been already addressed, we believe that a deeper theoretical understanding of interference networks and their entities still lack, and this will be the target of this paper.
Our expectation is to indicate a new, unusual way of seeing wireless ad hoc networks based on how relation rules should be implemented and adapted from the available and locally processed information so as to guarantee a more robust network performance in relation to both internal and external pressures' variations.
%

\section{General Scenario Description and Formal Definitions}
\label{sec_general_description}

In this section we provide the theoretical background necessary to state the main claims of this paper.
Firstly we introduce the coding-decoding scheme employed in the scenarios analyzed here, followed by the description of other strategies, namely  access control mechanisms and packet retransmissions.
Then, we state the metrics applied to evaluate each link and the overall network performance in a given period of time.
%

\subsection{Coding-decoding scheme}
\label{subsec_coding}
Let us start assuming an interference network composed by $K+1$ single-hop source-destination pairs (also called transmitter-receiver pairs) distributed over an given area of $A$ [m$^2$].
For this scenario, we revisit the basic statements of \cite[Sec. II]{Baccelli2011} that characterize the capacity region of Gaussian point-to-point (G-ptp) codes for an arbitrary number of communication pairs distributed over a given area.
We consider that each source node $i \in [0,\; K]$ wants to transmit an independent message $M_i \in \left[1:\; 2^{n R_i}\right]$, where $n$ is the block code length, to its respective destination $i$ at rate $R_i$ [bits/s/Hz].
Let $\mathbf{X} = (X_0,X_1,...,X_K)$ denote the set of transmitted signals  and $Z_i \sim \mathcal{CN}(0,1)$ be the complex circularly symmetric Gaussian random variable that represents the noise effect, then the received signal $Y_i$  at receiver $i$ is
\begin{equation}
\label{eq_detected_signal}
    Y_i = \sum_{j=0}^K g_{ij} X_j + Z_i,
\end{equation}
where $g_{ij}$ are the complex channel gains between transmitter $j$ (TX$_j$) and receiver $i$ (RX$_i$).
Then, considering that every transmitted signal is subject to the same power constrain of $Q$ [W/Hz], the received power  at RX$_i$ related to TX$_j$ is given by $P_{ij} = |g_{ij}|^2 Q$. 
Now, we assume that each transmitter (TX) uses a G-ptp code with a set of randomly and independently generated codewords $x_i^n(m_i) = (x_{i1},...,x_{in})(m_i)$ following i.i.d. $\mathcal{CN}(0,\sigma^2)$ sequences such that $0< \sigma^2 \leq Q$, where $m_i \in \left[1:\; 2^{n R_i}\right], \; i \in [0,\; K]$.
In the receiver (RX) side, a signal $y_i^n$ is received over the interference channel given by \eqref{eq_detected_signal} and an estimation $\hat{m_i}(y_i^n) \in \left[1:\; 2^{n R_i}\right]$ of the transmitted message can be then obtained.
An error event in the decoding happens whenever the transmitted message is not the estimated one.
Using this fact, we can state the error probability of our G-ptp code as follows:
\begin{equation}
\label{eq_error_probability}
    p_n = \dfrac{1}{1+K} \; \sum_{i=0}^K \textup{Pr}[\hat{M_i} \neq M_i].
\end{equation}

Based on the above, we define next the achievable rates and the capacity region for G-ptp codes.

\begin{definition}[achievable rates and capacity region]
\emph{Let $\overline{p}_n$ be the average of the error probability $p_n$ of a G-ptp code.
Then, a rate tuple $\mathbf{R} = (R_0,...,R_K)$ is said to be achievable if $\overline{p}_n \rightarrow 0$ when $n \rightarrow \infty$.
In addition, the capacity region using G-ptp codes is the closure of the set of achievable tuple rates $\mathbf{R}$.}
\end{definition}
 
Next, this definition is applied to establish which are the conditions for a rate tuple be achievable, obtaining afterwards the network capacity region.
 
\begin{theorem}[capacity region rewritten from \cite{Baccelli2011}]
\label{the_cap_reg}
Let $\mathbb{A}_i$ denote a subset of all $K+1$ transmitters that contains the TX$_i$ with $i  \in [0,\; K]$  and $\bar{\mathbb{A}}_i$ its complement.
Each receiver $i$  observes therefore a multiple access channel whose capacity region $\mathcal{C}_i$ is computed as
\begin{equation}
\label{eq_MAC_capacity_region}
    \mathcal{C}_i = \left\{ \mathbf{R}: \; \underset{k \in \mathbb{A}_i}{\sum} R_k \leq \log_2\left( 1 + \dfrac{\underset{k \in \mathbb{A}_i}{\sum} P_{ik}}{1 + \underset{j \in \bar{\mathbb{A}}_i}{\sum} P_{ij}} \right) \forall \;\; \mathbb{A}_i \subseteq \mathbb{A} \right\},
\end{equation}
and then the capacity region $\mathcal{C} $ of the Gaussian interference channel with G-ptp codes is obtained as
\begin{equation}
\label{eq_capacity_region}
    \mathcal{C} =  \bigcap_{i=0}^{K} \mathcal{C}_i.  
\end{equation}
\end{theorem}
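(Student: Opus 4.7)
The plan is to reduce Theorem~\ref{the_cap_reg} to the classical capacity theorem for the Gaussian multiple-access channel (MAC), exploiting the two structural consequences of the G-ptp restriction: (i) the codewords of all transmitters are mutually independent and (ii) each $X_j^n$ is i.i.d.\ $\mathcal{CN}(0,\sigma^2)$, so any linear combination of interferers is again circularly symmetric complex Gaussian. Together these say that, for every receiver $i$ and every subset $\mathbb{A}_i\ni i$, the aggregate ``effective noise'' $\sum_{j\in\bar{\mathbb{A}}_i} g_{ij}X_j+Z_i$ is Gaussian with variance $1+\sum_{j\in\bar{\mathbb{A}}_i} P_{ij}$ and independent of $\{X_k\}_{k\in\mathbb{A}_i}$. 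Receiver $i$ therefore sees a genuine Gaussian MAC whose users carry powers $\{P_{ik}\}_{k\in\mathbb{A}_i}$ against a noise floor of $1+\sum_{j\in\bar{\mathbb{A}}_i} P_{ij}$, which is precisely the quantity appearing inside the logarithm of \eqref{eq_MAC_capacity_region}.

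For achievability I would invoke the standard simultaneous/joint-typicality decoder at each receiver $i$ on the appropriate subset $\mathbb{A}_i$. By the per-subset union bound from the MAC coding theorem applied to Gaussian inputs (as in \cite[Ch.~4]{ElGamal_2012}), the probability of error at receiver $i$ tends to zero as $n\to\infty$ whenever $\mathbf{R}$ satisfies the inequality in \eqref{eq_MAC_capacity_region} for every $\mathbb{A}_i\ni i$, i.e.\ $\mathbf{R}\in\mathcal{C}_i$. Combining across $i$ via a further union bound on error events, the average error probability in \eqref{eq_error_probability} vanishes whenever $\mathbf{R}\in\bigcap_i\mathcal{C}_i$, so every such tuple is achievable.

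For the converse I would argue receiver by receiver. Fix $i$, pick an arbitrary $\mathbb{A}_i\ni i$, and apply Fano's inequality to the ensemble of messages indexed by $\mathbb{A}_i$, obtaining
\[
\sum_{k\in\mathbb{A}_i} R_k \leq \frac{1}{n} I\bigl(X^n_{\mathbb{A}_i};Y_i^n\bigr) + \varepsilon_n,
\]
with $\varepsilon_n\to 0$. Because the input distribution is pinned down by the G-ptp constraint to i.i.d.\ $\mathcal{CN}(0,\sigma^2)$ and the interferers in $\bar{\mathbb{A}}_i$ are independent Gaussians that can be whitened into the noise, a direct differential-entropy computation evaluates $n^{-1} I(X^n_{\mathbb{A}_i};Y_i^n)$ to exactly the right-hand side of \eqref{eq_MAC_capacity_region}. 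Since $\mathbb{A}_i$ was arbitrary, this forces $\mathbf{R}\in\mathcal{C}_i$, and intersecting over $i\in[0,K]$ produces the outer bound \eqref{eq_capacity_region}.

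The step I expect to be the main obstacle is pinning down the precise role of the G-ptp restriction in the converse: without it, a Gaussian-MAC converse would require an additional covariance-maximization over input distributions, which would in general enlarge the outer bound and is sidestepped here only because the coding ensemble has been fixed a priori. I would therefore be careful to state Theorem~\ref{the_cap_reg} as the capacity region \emph{within the G-ptp ensemble}, rather than the (still open) information-theoretic capacity of the Gaussian interference channel, and to verify that the per-receiver intersection is tight by exhibiting the joint-typicality decoder above as the matching inner-bound strategy.
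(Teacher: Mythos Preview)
Your sketch is sound and tracks the argument of the cited source, but note that the present paper does not actually supply a proof of Theorem~\ref{the_cap_reg}: the result is simply imported from \cite{Baccelli2011} (as the theorem title ``rewritten from \cite{Baccelli2011}'' signals), and the paper moves directly on to Theorems~\ref{the_IAN_achi} and~\ref{the_OPT_achi}. So there is no in-paper proof to compare against; your proposal is effectively a reconstruction of the Baccelli--El~Gamal--Tse argument rather than an alternative to anything written here.

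Two small remarks on the reconstruction itself. First, in the achievability direction, the constraints in \eqref{eq_MAC_capacity_region} are indexed by \emph{all} subsets $\mathbb{A}_i\ni i$, with the complement entering the denominator; this is the simultaneous (non-unique) decoding region for receiver~$i$, not the conditional-MAC region $I(X_{\mathbb{A}_i};Y_i\mid X_{\bar{\mathbb{A}}_i})$. Your phrase ``on the appropriate subset $\mathbb{A}_i$'' slightly undersells this: the matching decoder is the one that jointly decodes its own message while treating the remaining interferers as Gaussian noise, and the union over subset choices is what produces the full family of inequalities. Second, your caution about the converse is exactly right and worth keeping explicit: the outer bound coincides with the inner bound only because the G-ptp ensemble fixes the inputs to i.i.d.\ $\mathcal{CN}(0,\sigma^2)$, so the Fano step lands directly on the logarithmic term in \eqref{eq_MAC_capacity_region} without any maximization over input laws.
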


As one can notice, the capacity region of Gaussian interference channel using G-ptp codes stated above requires a decoder that treats some of the interfering signals as noise, while others have their messages jointly decoded with the desired one.
This result indicates how the the capacity-achieving decoding strategy should be designed and hereafter we denote this receiver as OPT.
Such a solution, however, relies on the knowledge of the interfering TXs coding books and it can be computationally hard if many messages are jointly decoded.
For this reason, we also consider a simpler decoding rule where all interfering TXs are treated as noise and this decoding rule is denoted IAN.

From above, we can state the achievable rates under the IAN and the OPT decoding rules as follows.

\begin{theorem}[achievable rates for IAN decoding rule]
\label{the_IAN_achi}
Assuming that the noise is Gaussian and the TXs employ G-ptp codes, then the rate $R_k$ associated with a given link  TX$_k$-RX$_k$ is said to be achievable when the IAN strategy is used if, and only if, the following inequality holds:
\begin{equation}
\label{eq_IAN_inequality}
    R_k \leq \log_2\left( 1 + \dfrac{P_{kk}}{1 + \underset{j \in \mathbb{A} \backslash \{ k \} }{\sum} P_{kj}} \right),
    \vspace{-1ex}
\end{equation}
where $ \mathbb{A}$ represents the set of active transmitters.
\end{theorem}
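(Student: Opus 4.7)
The plan is to derive the claim as a direct specialization of Theorem~\ref{the_cap_reg}: take the subset $\mathbb{A}_k = \{k\}$ in the description of $\mathcal{C}_k$ at receiver $k$ and ignore every other subset, since the IAN decoder commits, by construction, to decoding $M_k$ alone while absorbing every other codeword into the noise term. The bound \eqref{eq_IAN_inequality} is then immediately the single-user constraint produced by that slice. What remains is to verify the two directions operationally, so that the ``if and only if'' is not merely inherited from the joint-decoding result.

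For achievability, I would run the standard random-coding/joint-typicality argument for a Gaussian point-to-point channel. Because each interferer independently draws its codeword from an i.i.d.\ $\mathcal{CN}(0,\sigma^2)$ ensemble, the aggregate interference seen by RX$_k$, conditioned on the channel gains, is a sum of independent complex Gaussians with total variance $\sum_{j \in \mathbb{A} \setminus \{k\}} P_{kj}$; combined with the thermal noise $Z_k \sim \mathcal{CN}(0,1)$ this gives an effective complex Gaussian noise of variance $1 + \sum_{j \neq k} P_{kj}$. Choosing $\sigma^2 = Q$ at TX$_k$ and invoking Shannon's point-to-point coding theorem drives $\overline{p}_n \to 0$ for any $R_k$ strictly below the right-hand side of \eqref{eq_IAN_inequality}; taking the closure then yields the stated non-strict inequality.

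For the converse, I would apply Fano's inequality to the IAN-induced single-user channel. Since the decoder processes only the received sequence $Y_k^n$ and treats interference as noise, the relevant single-letter bound is $R_k \leq I(X_k;Y_k)$, which, subject to the power constraint $\sigma^2 \leq Q$, is maximized by a Gaussian input and equals exactly the right-hand side of \eqref{eq_IAN_inequality}. Any higher $R_k$ forces $\overline{p}_n$ bounded away from zero, contradicting achievability in the sense of Definition~1.

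The main obstacle, though largely bookkeeping, is arguing cleanly that the converse inherits the point-to-point Gaussian form even though the ``noise'' here is actually a structured, data-bearing interfering signal. This reduces to observing that the IAN decoder is, by fiat in the theorem's hypothesis, denied any use of the interferers' codebooks, so no genie argument can enlarge the achievable rate; equivalently, once the decoder is pinned to IAN, the effective channel is memoryless with additive Gaussian noise of the stated variance, and the classical converse applies without modification.
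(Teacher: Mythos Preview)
Your proposal is correct and matches the paper's approach exactly: the paper's proof is a single sentence observing that \eqref{eq_IAN_inequality} is the well-known special case of \eqref{eq_MAC_capacity_region} obtained when RX$_k$ decodes only TX$_k$'s message and treats the remaining signals as noise, i.e.\ the choice $\mathbb{A}_k=\{k\}$ you identify in your first paragraph. Your subsequent operational verification of achievability and converse goes well beyond what the paper actually supplies, but the core reduction to Theorem~\ref{the_cap_reg} is identical.
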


\begin{proof}
This is a well-known result from information theory and can be viewed as a special case of \eqref{eq_MAC_capacity_region}, assuming that only the message of TX$_k$ is decoded by RX$_k$ while the others are treated as noise.
\end{proof}
\begin{theorem}[achievable rates for OPT decoding rule]
\label{the_OPT_achi}
\emph{Assuming that the noise is Gaussian and the TXs employ G-ptp codes, then the rate $R_k$ associated with a given link  TX$_k$-RX$_k$ is said to be achievable when the OPT decoder is employed if, and only if, the following inequality holds:
\begin{equation}
\label{eq_OPT_inequality}
      R_k \leq \log_2\left( 1 + \dfrac{\underset{i \in \mathbb{A}^*_k}{\sum} P_{ki}}{1 + \underset{j \in \bar{\mathbb{A}}^*_k}{\sum} P_{kj}} \right) - \underset{i \in \mathbb{A}^*_k \backslash \{ k \}}{\sum} R_i,
\end{equation}
where $\mathbb{A}^*_k$ represents the subset of transmitters whose messages are decoded by receiver $k$ and $\mathbb{A}^*_k \cup \bar{\mathbb{A}}^*_k = \mathbb{A}$ is the set of all active transmitters throughout the network.
}
\end{theorem}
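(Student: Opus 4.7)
The plan is to obtain this statement as a direct specialization of Theorem~\ref{the_cap_reg} to receiver $k$ and the particular subset $\mathbb{A}^*_k$. Concretely, Theorem~\ref{the_cap_reg} asserts that every achievable rate tuple $\mathbf{R}$ must lie in $\mathcal{C}_k$, which means that for \emph{every} subset $\mathbb{A}_k\subseteq\mathbb{A}$ containing TX$_k$ the corresponding sum‐rate inequality~\eqref{eq_MAC_capacity_region} must hold at RX$_k$. So the first step I would take is to choose $\mathbb{A}_k=\mathbb{A}^*_k$, i.e.\ the subset of transmitters whose messages the OPT decoder at RX$_k$ jointly decodes (all the remaining interferers, indexed by $\bar{\mathbb{A}}^*_k$, are treated as noise). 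The resulting sum‐rate bound reads
\begin{equation*}
\sum_{i\in \mathbb{A}^*_k} R_i \;\leq\; \log_2\!\left(1+\dfrac{\underset{i\in\mathbb{A}^*_k}{\sum} P_{ki}}{1+\underset{j\in\bar{\mathbb{A}}^*_k}{\sum} P_{kj}}\right).
\end{equation*}

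The second step is then a one‐line algebraic rearrangement: isolate $R_k$ by subtracting $\sum_{i\in \mathbb{A}^*_k\setminus\{k\}} R_i$ from both sides, which yields exactly~\eqref{eq_OPT_inequality}. This establishes the ``only if'' (necessity) direction, since any rate tuple outside this bound would already violate the MAC constraint at RX$_k$ and hence fall outside $\mathcal{C}_k\supseteq\mathcal{C}$.

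For the ``if'' (sufficiency) direction, I would invoke the standard MAC joint‐typicality decoding argument underlying Theorem~\ref{the_cap_reg}: when all the sum‐rate constraints over subsets of $\mathbb{A}^*_k$ are met, RX$_k$ can, using a G-ptp codebook at each transmitter in $\mathbb{A}^*_k$, jointly decode the entire tuple $(M_i)_{i\in \mathbb{A}^*_k}$ with error probability vanishing in $n$ while treating the signals of $\bar{\mathbb{A}}^*_k$ as an additional Gaussian noise component (its variance is precisely $\sum_{j\in\bar{\mathbb{A}}^*_k} P_{kj}$ because G-ptp codewords are i.i.d.\ Gaussian). Since $M_k$ is extracted from this joint decoding, $R_k$ is achievable, which yields the converse implication.

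The only delicate point I would need to address carefully is how to phrase the ``if and only if'' when the theorem writes a single inequality rather than the full family of $2^{|\mathbb{A}^*_k|-1}$ MAC constraints: the cleanest reading is that~\eqref{eq_OPT_inequality} is the bound on $R_k$ \emph{given} that the remaining rates $\{R_i:i\in\mathbb{A}^*_k\setminus\{k\}\}$ are themselves chosen inside the MAC region at RX$_k$, in which case the full‐subset constraint is the one binding on $R_k$ itself (the proper subset constraints bind on the other rates). I expect this bookkeeping to be the main, though still minor, obstacle; everything else reduces to citing Theorem~\ref{the_cap_reg} and rearranging a logarithm.
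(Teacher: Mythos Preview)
Your proposal is correct and follows essentially the same approach as the paper: the paper's own proof is a single sentence stating that \eqref{eq_OPT_inequality} is obtained by a ``simple manipulation of equation \eqref{eq_MAC_capacity_region} in order to isolate the rate $R_k$ \ldots\ considering the subsets $\mathbb{A}^*_k$ that lead to achievable rates,'' which is exactly the specialization-plus-rearrangement you describe. Your additional discussion of the sufficiency direction and of the bookkeeping around the full family of MAC constraints is more careful than what the paper actually writes, but it is entirely in the same spirit.
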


\begin{proof}
To obtain \eqref{eq_OPT_inequality}, we proceed with a simple manipulation of equation \eqref{eq_MAC_capacity_region} in order to isolate the rate $R_k$ related to  TX$_k$-RX$_k$ link, considering the subsets $\mathbb{A}^*_k$ that lead to achievable rates.
\end{proof}

From this, we can define an outage event as the following.

\begin{definition}[outage event]
\label{def_out}
\emph{A given link  TX$_k$-RX$_k$ is said to be in outage if the coding rate $R_k$ for the decoding rule used is not achievable during any period of the message transmission.}
\end{definition}

\subsection{Access control and retransmission  strategies}
\label{subsec_access_retx}
We describe here two different ways to access the channel in a random manner, namely slotted ALOHA and Carrier Sensing Multiple Access (CSMA), as studied in \cite{kaynia_TWC_2010,Nardelli2012_accepted_Mariam}. 
The advantage of these schemes is that they work without any central controlling, except by synchronization feature in the case of slotted ALOHA protocol that can be easily implemented if the nodes share the same internal clock.
In addition, we indicate a time-division scheme that, differently from random access protocol, determines the groups of concurrent transmissions for each given time-slot.
This strategy, though, can be implemented in a distributed fashion, maintaining the advantages of its random counterparts. 

In the slotted ALOHA protocol, each node TX$_k$ attempts to send messages or information packets to its respective RX with a given probability $p_k$ at the beginning of each time-slot.
Otherwise, with probability $1-p_k$, TX$_k$ stays in silence and thus does not cause any interference in that time-slot.
In the CSMA, on the other hand, there is no such synchronized communications and the TXs try to access the network whenever the channel is considering idle. 
The key mechanism of CSMA is the so-called \emph{carrier sensing} which is performed before each transmission attempt.
If the channel is considered idle by the decision-making node (which can be either/both TX or/and RX), the transmission begins immediately. 
When the channel is considered busy, the transmission is then backed off for a period of time period (normally random).
The back-off procedure may be repeated as many times as needed until the node finds the channel free or limited by a maximum number of attempts.

Besides, time-division schemes can be implemented in a distributed fashion if synchronism is possible.
Under this scheme, TXs will be divided into subsets that will transmit at the same time (e.g. some TXs will transmit in odd time-slots, while others in the even ones).
This will decrease the number of interfering TXs and improve the network condition. 
However, this  performance improvement is achieved by decreasing the effective time used to transmit (e.g. if we divide the TXs into two subsets, they will have half of the time to transmit their messages, decreasing their performance).

All these access procedures, however, do not avoid outage events as stated by Definition \ref{def_out} and decoding errors may occur.
Then, to avoid information losses, messages or packets that have not been successfully decoded during a given transmission attempt can be retransmitted.
Intuitively to implement a retransmission strategy, a feedback channel between the TX-RX links should be available to send retransmission requests.
As in the case of back-offs, retransmissions can be allowed until the message is successfully transmitted or up to a maximum number.

If this number is unbounded, packets can be retransmitted until a successful decoding and then the packet loss rate (PLR) tends to zero.
This PLR in turn comes at cost of a lower spectral efficiency since more channel usages may be necessary to successfully transmit the desired information.
Moreover, unbounded retransmission also increases the time in which the packet stays in the queue, which may create stability problems as we will discuss in the next paragraph.
On the other hand, when the number of retransmissions is limited by a given number, the PLR is not zero, but rather it is a function of such number as well as of other network variables.
In any case, if such a function is known, it is always possible to bound the PLR by a given small value.

As mentioned before, when many retransmissions and/or back-off are required to transmit successfully a message, the queue systems of the TXs start facing issues regarding their own stability.
To model this, let us assume here a single-server discrete-time queuing system\footnote{Even when continuous time systems are considered, this modeling can be applied if the arrival and server processes can be discretized accordingly.}, the backlog $Q_k(t)$ (queue length) of TX$_k$ with $k\in \mathbb{A}$ is determined for $t \in \left\{0, 1, 2, \ldots \right\}$ by \cite{Neely2010}:
\begin{equation}
\label{eq_queue1}
	Q_k(t+1)= \max[Q_k(t) - Y_k(t),0] + X_k(t), 
\end{equation}
where $\left\{Y_k(t)\right\}_{t=0}^{\infty}$ is the server process of  TX$_k$ and the initial queue lengths $\left\{Q_k(0)\right\}$ are chosen independently across TXs according to some probability distribution.

Based on such a equation, we can define the queue stability as follows \cite{Luo1999}.

\begin{definition}[stability]
\label{def_stability}
\textit{A backlog process $\left\{Q_k(t)\right\}$ is stable if the following holds
\begin{equation}
\label{eq_stable_cst}
\displaystyle \lim_{t \to \infty} \mathbb{P}[Q_k(t) < x] = F_{Q_k}(x) \ \ \ \ {\rm{and}} \ \ \ \ \lim_{x \to \infty} F_{Q_k}(x) = 1,
\end{equation}
for $x \in \mathbb{R}$, where $F_{Q_k}(x)$ is the limiting distribution function of $\left\{Q_k(t)\right\}$. }
\end{definition}
Clearly, the stability of the queues depends on both $\left\{X_k(t)\right\}_{t=0}^{\infty}$ and $\left\{Y_k(t)\right\}_{t=0}^{\infty}$.
While the former is an input parameter that the network elements cannot control, the latter is determined by the medium access protocol, the retransmission policy and the outage probability.
We can associate the queue system of TX$_k$ and its stability to an \emph{internal pressure} experienced by TX$_k$ to send information to RX$_k$, as we will see later.

\subsection{Network performance measures}
\label{subsec_metrics}
We present now two possible performance measures that can be used to evaluate the efficiency of the interference network studied here.
Specifically we consider one to assess the transmissions of each link and another to quantify the whole network efficiency.
\begin{definition}[effective link throughput]
\label{def_effthrough}
\emph{Let $R_k$ be the rate used by TX$_k$ to code its message to RX$_k$ that is transmitted during a given period of time $[t_1,t_2)$.
Then, the effective rate $\mathcal{R}_k$ measured  in [bits/s/Hz] is defined as
\vspace{-1ex}
\begin{equation}
\label{eq_link_rate}
    \mathcal{R}_k = R_k \cdot \Pr[\textup{Outage event at RX$_k$ during } [t_1,t_2)].
    \vspace{-1ex}
\end{equation}
}
\end{definition}

\begin{definition}[spatial throughput]
\label{def_spathrough}
\emph{Let $A$ [m$^2$] be the network area under analysis.
Then, during a given period of time $[t_1,t_2)$, the spatial rate $\mathcal{S}_{[t_1,t_2)}$ measured in [bits/s/Hz/m$^2$] is defined as
\vspace{-1ex}
\begin{equation}
\label{eq_spatial_rate}
    \mathcal{S}_{[t_1,t_2)} = \dfrac{1}{A} \; \sum_{i \in \mathbb{A}} \dfrac{\Delta t_i}{t_2-t_1}  \mathcal{R}_i,
    \vspace{-1ex}
\end{equation}
where $ 0 \leq \Delta t_i \leq t_2 - t_1$ is the length of time in which TX$_i$ is transmitting to RX$_i$ during the interval $[t_1,t_2)$ and recall that $\mathbb{A}$ is the set of all TXs in the network.
}
\end{definition}

We will show in the next section how these measures can be employed the network elements to  adaptively change its design setting so as to improve the system performance.

\section{A Complex System View of Interference Networks}
\label{sec_complview}
In this section we finally show how a complex system thinking can help to understand the limits of interference networks as well as the desired capabilities to reach such limits.
First we are going to identify which are the internal and external pressures experienced by the network elements as well as other external factors that may affect the system performance based on the modeling previously presented.
Then we will study two different extreme cases, namely quasi-static network and highly mobile network; for them, we will make claims about how the communication system should be designed to improve its efficiency mainly based on results presented in \cite{Baccelli2011,Nardelli2012_WCNC_Spatial,Nardelli2012_WCNC_Stability,Nardelli2012_submitted_Marios,Nardelli2012_accepted_Mariam}.
Finally, we apply those claims to make a guideline on the implementation of an adaptive algorithm employed by each node and tailed for interference networks, highlighting which capabilities are needed to build it.

\subsection{Pressures and external factors}
\label{subsec_press}
Let us start by defining internal pressure, external pressure and external factor for the interference network under analysis as follows.
Internal pressures are the constraints that each network element has to satisfy due to its own quality requirement.
In our case, we can cite as example of possible internal pressures a minimum coding rate (Section \ref{SubSec_ReviewCapacity}), a bounded outage probability (Definition \ref{def_out}), a  minimum required effective throughput (Definition \ref{def_effthrough}), a maximum PLR after back-offs and retransmissions (Section \ref{subsec_access_retx}) and buffer stability (Definition \ref{def_stability}).

In a similar way we can relate the external pressures to the constraints imposed by the network.
For example, we can list the fairness of the medium access across different links (i.e. nodes should have similar opportunities of access the medium), a maximum transmit power used by TXs in order to control the interference level and a floor level of spatial throughput (Definition \ref{def_spathrough}).
It is important noticing that different from the internal pressures the items listed above are not controlled by the link or any other entity, but rather they are product of the interactions among links that in turn are subject to their own internal pressures.

Clearly, how to cope with such interactions while preserving the overall network requirements and at the same time satisfying the internal pressures of each individual link can be viewed as the biggest designing challenge that engineers should deal with.
To complicate even more this picture, the network should be robust enough to variations of external factors.
For example, noise level and traffic conditions can vary during the day, mobility of nodes causes changes in the network topology, power blackouts can occur and so on; even under these wide range of different, many times unpredictable, external conditions, the system should work properly.

We can make here a parallel with biological or social systems, where the solution of problems related to internal and external pressures as well as variation of external conditions might emerge in a self-organized manner.
In such scenarios, however, nothing can be guaranteed and long time can be experienced to achieve a satisfactory solution, which in turn might be neither robust nor stable (e.g. \cite{Mitchell2009,Nowak2012}); even worse, sometimes solutions can vary a lot and this is not desired by engineering systems.

To build a robust adaptive communication system for interference networks that does not face such problems, we present next the basis of a practical algorithm that is able sustain adaptive and self-organized behaviors, making the network functional under different conditions. 
More specifically we will describe two scenarios where the conditions of mobility are extreme - highly mobile and quasi-static topologies - and then we will make claims regarding how the network should be designed to improve its performance under each one of those conditions.
Based on those claims, we will then discuss the construction of a stand alone procedure that is functional for interference networks and works similarly to the solution offered by intelligent humans in the chatting in a party problem presented in Section \ref{SubSec_Informal}.

\subsection{Cases of study}
\label{subsec_cases}
In this section we apply the modeling and the terminology previously stated to assess two examples of decentralized network deployments, namely quasi-static and high-mobile.
For each one of these scenarios, we will make claims regarding the design setting based on both the network conditions and pressures on its elements. 
Here it is worth pointing out that we are not going to present formal proofs of the claims, which in fact are conclusions that one can be drawn  from the following works available in the literature  \cite{Weber2012,Baccelli2011,Nardelli2012_WCNC_Spatial,Nardelli2012_WCNC_Stability,Nardelli2012_submitted_Marios,Nardelli2012_accepted_Mariam}.
%

\subsubsection{Quasi-static network}
\label{subsec_quasistatic}
Let us assume here a network where the elements are quasi-static, which indicates its topology changes very slow during the day.
Examples of this can be an office with desktop computers, electro-electronics devices in a house or machines in an industrial plant.
For this scenario, we consider that the channel gains are strictly related to the distance-dependent path-loss \cite{yacoub_foundations_1993} (i.e. channel gains related to slow and fast fading are neglected) and that all TXs are subject to the same power constraint imposed by the network.
Based on these assumptions and considering that all TXs are able to estimate the distances between its own RX and the interfering TXs, we can make the following claims.

\begin{claim}[achievable rate]
\label{claim_achievable_quasi_static}
The TXs can individually code their messages such that the coding rate is achievable for that static topology regardless of the decoding rule used (refer to Theorems \ref{the_IAN_achi} and \ref{the_OPT_achi}). 
\end{claim}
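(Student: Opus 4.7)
The plan is to prove the claim by a direct invocation of Theorems \ref{the_IAN_achi} and \ref{the_OPT_achi}, exploiting the deterministic nature of the quasi-static channel to turn the achievability bounds into numbers that every TX can compute locally. First I would argue that under the stated assumptions each transmitter TX$_k$ knows, or can compute from the inter-node distances and the common per-TX power constraint $Q$, all received powers $P_{ki} = |g_{ki}|^{2} Q$ for $i\in\mathbb{A}$: indeed, $|g_{ki}|^{2}$ is a deterministic function of the distance $d_{ki}$ through the path-loss model of \cite{yacoub_foundations_1993}, and no fading component is superimposed. Because the topology is quasi-static these quantities stay (essentially) constant over any coding horizon used to invoke the limiting definition of achievability ($n\to\infty$).

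With that observation in hand, the IAN part is almost immediate: the right-hand side of \eqref{eq_IAN_inequality} becomes a fixed, locally computable number, so picking any $R_k$ strictly below it yields an achievable rate by Theorem \ref{the_IAN_achi}, and each TX$_k$ can make this choice without consulting its peers. For the OPT part, Theorem \ref{the_OPT_achi} couples $R_k$ with the rates $\{R_i\}_{i\in\mathbb{A}^*_k\setminus\{k\}}$ of the TXs that RX$_k$ jointly decodes, so a little more care is needed. I would resolve the coupling by exhibiting a common deterministic rate-selection rule: each TX$_k$ first adopts (a slight shrinkage of) its IAN-achievable rate from the previous step, and each OPT receiver then chooses $\mathbb{A}^*_k$ as the largest subset containing $k$ for which the multiple-access constraints \eqref{eq_MAC_capacity_region} are satisfied by that rate tuple. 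Since the geometry and the rule are common knowledge, every node reconstructs the same tuple $\mathbf{R}$, and by Theorem \ref{the_cap_reg} this tuple lies in $\mathcal{C}=\bigcap_{i=0}^{K}\mathcal{C}_i$, which is exactly the achievability requirement for OPT.

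The main obstacle I anticipate is precisely the coupling in the OPT case: the word \emph{individually} in the claim must be reconciled with the fact that \eqref{eq_OPT_inequality} explicitly involves the other TXs' rates. The escape route is the quasi-static assumption itself, which lets the shared geometry play the role of coordinating side-information so that a deterministic common rule produces a consistent, jointly achievable tuple without any explicit exchange of rate information. Conversely, this is also where the proof would fail if mobility or fast fading were reintroduced, which is consistent with the scope stated in Section \ref{subsec_quasistatic} and motivates the separate treatment promised for the highly mobile case.
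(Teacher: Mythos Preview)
The paper does not actually prove this claim: it states explicitly at the start of Section~\ref{subsec_cases} that no formal proofs will be given and that the claims are conclusions drawn from \cite{Weber2012,Baccelli2011,Nardelli2012_WCNC_Spatial,Nardelli2012_WCNC_Stability,Nardelli2012_submitted_Marios,Nardelli2012_accepted_Mariam}. So there is no ``paper's proof'' to match; you are supplying an argument where the authors chose only to cite.

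Your IAN argument is exactly the right one and is all that is really needed. For the OPT part, however, you work much harder than necessary and in doing so introduce a genuine gap. Your ``common deterministic rule'' requires TX$_k$ to reconstruct every other link's IAN rate, which in turn requires TX$_k$ to know the received powers $P_{ij}$ at \emph{every} receiver RX$_i$, i.e.\ the distances from all TXs to all RXs. The paper's assumption is weaker: each TX knows only the distances between \emph{its own} RX and the interfering TXs. So the coordination mechanism you propose is not supported by the stated side-information.

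The fix is to notice that the coupling problem is illusory. If $R_k$ satisfies \eqref{eq_IAN_inequality}, then the OPT receiver at RX$_k$ can simply take $\mathbb{A}^*_k=\{k\}$, in which case \eqref{eq_OPT_inequality} collapses to \eqref{eq_IAN_inequality}. Hence any IAN-achievable rate is automatically OPT-achievable, with no need to know the other TXs' rates or to invoke a network-wide deterministic rule. This reading also matches the phrase ``regardless of the decoding rule used'' in the claim: a single locally computed rate works for both decoders. Replace your OPT paragraph with this one-line observation and the argument is clean and fully supported by the quasi-static assumptions.
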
 

\begin{claim}[decoding rules]
\label{claim_dec_quasi_static}
If the network is sparsely populated, IAN and OPT decoders perform similarly in terms of spatial throughput.
Conversely, if the network is densely populated, OPT significantly outperforms IAN.
\end{claim}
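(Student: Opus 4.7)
The plan is to derive both halves of Claim \ref{claim_dec_quasi_static} directly from the achievable rate expressions of Theorems \ref{the_IAN_achi} and \ref{the_OPT_achi}, substituted into the spatial throughput of Definition \ref{def_spathrough}, and then to analyze how these quantities behave as the density of active transmitters varies. Under the quasi-static assumption, the channel gains $g_{ij}$ depend only on the TX$_j$--RX$_i$ distance, so every TX sees a deterministic picture of the interference field around every RX; this is exactly what Claim \ref{claim_achievable_quasi_static} exploits, and it lets me compare the two decoders link by link rather than only in distribution.

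First I would handle the sparsely populated regime. Here the typical distance from any RX$_k$ to the nearest interfering TX$_j$ is large, so for the standard distance-dependent path-loss law $|g_{kj}|^2$ decays quickly and $\sum_{j \neq k} P_{kj} \ll 1$. The IAN bound \eqref{eq_IAN_inequality} then collapses to approximately $\log_2(1+P_{kk})$, the interference-free single-link capacity. On the OPT side, the optimal set $\mathbb{A}^*_k$ of jointly decoded TXs shrinks to $\{k\}$ once every cross-link gain is weaker than the extra rate needed to include that interferer inside the multiple-access region \eqref{eq_MAC_capacity_region}; the bound \eqref{eq_OPT_inequality} then also reduces to $\log_2(1+P_{kk})$. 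Summing over the few active links and normalizing by area yields essentially identical spatial throughputs, which is the first half of the claim.

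Next I would turn to the densely populated regime. Now $\sum_{j \neq k} P_{kj}$ is large, the IAN denominator in \eqref{eq_IAN_inequality} drives each per-link rate toward zero, and, as in the transmission-capacity line cited in \cite{Weber2012}, the resulting spatial throughput saturates and eventually decays in density. OPT instead lets each RX$_k$ move the strongest interferers out of that denominator and into the desired sum $\sum_{i \in \mathbb{A}^*_k} P_{ki}$ of \eqref{eq_OPT_inequality}, paying only the corresponding rates $R_i$. Because those interferers are strong precisely when they lie inside the MAC region $\mathcal{C}_k$ of \eqref{eq_MAC_capacity_region}, the trade is favourable and the per-link rate stays bounded away from the IAN value as density grows. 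Aggregating via Definition \ref{def_spathrough} then recovers the spatial-capacity gap observed in \cite{Baccelli2011,Nardelli2012_WCNC_Spatial}, giving the ``significantly outperforms'' half of the claim.

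The main obstacle is making ``sparsely'' and ``densely populated'' quantitative enough that the informal asymptotics above become genuine comparisons between two scalar functions of the TX density. Concretely, one has to fix a spatial point process for the TX positions, pick a path-loss law, and either take a high/low-density limit so that the dominant term in the sums is identifiable, or pre-average the per-link rates via the Laplace-transform machinery of stochastic geometry before comparing. The sticking point is choosing the selection rule for $\mathbb{A}^*_k$ in a way that is both near-optimal and analytically tractable; I would either restrict to decoding the single strongest interferer (a known tight lower bound on OPT) or use the achievability construction of \cite{Baccelli2011} that simultaneously picks $\mathbb{A}^*_k$ for every $k$ consistently with all MAC regions $\mathcal{C}_i$.
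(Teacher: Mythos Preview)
The paper does not actually prove Claim~\ref{claim_dec_quasi_static}. Immediately before the block of claims the authors state explicitly that ``we are not going to present formal proofs of the claims, which in fact are conclusions that one can be drawn from the following works available in the literature,'' and right after Claim~\ref{claim_dec_quasi_static} they add only that ``The claims and the remark stated above are mainly based on the results presented in \cite{Baccelli2011,Nardelli2012_WCNC_Spatial}.'' So there is no in-paper argument to compare against: the paper's ``proof'' is a pointer to those two references.

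Your proposal therefore goes well beyond what the paper does. The heuristic you give is the right one and matches the mechanism behind the cited results: in the sparse regime $\sum_{j\neq k}P_{kj}\ll 1$ forces $\mathbb{A}^*_k=\{k\}$ and both \eqref{eq_IAN_inequality} and \eqref{eq_OPT_inequality} collapse to $\log_2(1+P_{kk})$; in the dense regime IAN is crushed by the denominator while OPT trades strong interferers into the numerator at the cost of their rates. Where your sketch is still soft is exactly where you flag it: the dense-regime comparison is only sharp once you fix a point process, a path-loss law, and a concrete rule for $\mathbb{A}^*_k$ that is simultaneously feasible at every receiver (so that the subtracted $\sum R_i$ in \eqref{eq_OPT_inequality} are themselves the rates produced by the same construction). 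The single-strongest-interferer surrogate you mention is the standard way to get a tractable lower bound on OPT that already separates from IAN, and that is precisely what the cited works exploit; if you want a self-contained argument rather than a citation, that is the piece you would need to carry out explicitly.
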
 

\textbf{Remark:} The advantage of OPT is obtained at expanse of a more computationally complex decoder. 
In addition, to jointly decode some messages, the RXs must know the codebook of other interfering TXs, which is not always feasible or desirable.

The claims and the remark stated above are mainly based on the results presented in \cite{Baccelli2011,Nardelli2012_WCNC_Spatial}.

\begin{claim}[access protocol and retransmissions] The design choices conditioned by the network conditions are stated below.
\label{claim_access_quasi_static}

\begin{itemize}
\item If all nodes are able to transmit with achievable rates (i.e. the network operates within its capacity region given by Theorem \ref{the_cap_reg}) and there is no minimum coding rate requirement, there is no need for retransmissions or medium access protocols.

\item If there exist a minimum required coding rate, then distributed time-division should be used to achieve such a rate.

\item If the network is densely populated, then distributed time-division schemes should be used to increase the spatial throughput.

\item If the traffic conditions are heavy and there is no minimum required coding rate, the TXs should find the best coding rate for the time-division employed such that each TX is able to maintain the stability of its own queue system. 

\item When a feasible combination does not exist for such heavy traffic conditions, the network will operate outside its capacity region. Therefore, distributed time-division schemes should be implemented together with a limited number of retransmissions, allowing for a bounded PLR.
\end{itemize}
\end{claim}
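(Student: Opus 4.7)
The plan is to dispatch the five bullets sequentially, since each one is a distinct design conclusion that follows from a different subset of the earlier definitions and theorems. For the first bullet, I would argue directly from the definition of achievability, Theorem \ref{the_cap_reg}, and Definition \ref{def_out}: if the selected rate tuple lies in $\mathcal{C}$, then by construction $\overline{p}_n \to 0$ as the block length grows, so the outage event does not occur for sufficiently long codes. Hence no retransmissions are needed, and because there is no minimum rate requirement, every TX can always pick an achievable rate without coordinating with the others, so no medium access protocol is needed either.

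For the second and third bullets, I would compare the per-link rate of Theorem \ref{the_IAN_achi} (or its OPT counterpart in Theorem \ref{the_OPT_achi}) under ``all TXs active'' against a distributed time-division schedule that activates only a fraction of the TXs in each slot. The rate inside the logarithm grows as the active interferer set shrinks, while Definition \ref{def_spathrough} charges a proportional penalty $\Delta t_i/(t_2-t_1)$ for the reduced airtime. The plan is then to show two things: first, that when a minimum rate $R_{\min}$ is required and the all-active rate lies below it, time-division is the only decentralized instrument that restores achievability; second, that when the node density is high, the logarithmic rate boost from interference suppression dominates the linear airtime loss, so $\mathcal{S}_{[t_1,t_2)}$ increases. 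The hardest step is establishing that a genuinely \emph{distributed} time-division schedule can realize these gains without central coordination, and for this I would rely on the constructions of the referenced works rather than build a new scheme.

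For the fourth bullet, I would recast the problem in terms of the queue recursion \eqref{eq_queue1} and the stability criterion of Definition \ref{def_stability}. Heavy traffic means the arrival process $\{X_k(t)\}$ has mean close to the largest sustainable service mean, so the task reduces to jointly selecting a coding rate and time-division pattern that maximizes the effective throughput $\mathcal{R}_k$ of Definition \ref{def_effthrough} subject to the drift condition that the long-run average of $\{Y_k(t)\}$ strictly exceeds that of $\{X_k(t)\}$ for every $k \in \mathbb{A}$. Whenever this constrained feasibility set is non-empty, an optimizer exists and is exactly the operating point that the claim prescribes.

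For the final bullet, I would take the contrapositive of the fourth: if no queue-stable combination satisfies the rate requirement, then every stable choice forces the network to operate outside $\mathcal{C}$, where the outage event of Definition \ref{def_out} occurs with positive probability. In that regime retransmissions are mandatory to recover throughput, but unbounded retransmissions would cause the queue and delay in \eqref{eq_queue1} to diverge, so capping the number of attempts yields a controlled PLR while preserving stability, recovering the operating point the claim prescribes. The overarching obstacle across all five items is characterizing precisely the boundary between the feasible-within-$\mathcal{C}$ and the infeasible regimes, since it depends on the full interference geometry and is only described implicitly through the referenced literature; I would therefore quote those results rather than rederive the threshold from scratch.
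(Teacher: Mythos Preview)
The paper does not prove this claim at all. Immediately before stating Claims \ref{claim_achievable_quasi_static}--\ref{claim_access_quasi_static}, the authors write explicitly that ``we are not going to present formal proofs of the claims, which in fact are conclusions that one can be drawn from the following works available in the literature,'' and they cite \cite{Weber2012,Baccelli2011,Nardelli2012_WCNC_Spatial,Nardelli2012_WCNC_Stability,Nardelli2012_submitted_Marios,Nardelli2012_accepted_Mariam}. No argument, sketch, or remark specific to Claim \ref{claim_access_quasi_static} appears in the paper beyond this blanket citation.

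Your proposal therefore goes well beyond what the paper itself offers: you actually attempt to derive each bullet from the paper's own definitions (achievability, outage, effective and spatial throughput, queue stability) together with Theorems \ref{the_cap_reg}--\ref{the_OPT_achi}. The heuristic arguments you sketch---no outage inside $\mathcal{C}$ implies no need for ARQ or MAC; the logarithmic interference-suppression gain versus the linear airtime penalty in $\mathcal{S}_{[t_1,t_2)}$; the drift condition on \eqref{eq_queue1} for stability; the contrapositive leading to bounded retransmissions outside $\mathcal{C}$---are all reasonable and consistent with the informal reasoning that pervades Section \ref{sec_complview}. Your own admission that the sharpest quantitative steps (the distributed time-division construction and the feasibility boundary) must be imported from the cited references is, in fact, exactly the paper's entire approach to this claim. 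In short: your plan is sound as a justification sketch, and it is strictly more detailed than the paper's treatment, which consists solely of the citation.
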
 

All these claims provide us some intuition of the design setting that a quasi-static interference network should have under different conditions. 
It is important to mention that these claims hold only when every network element has the knowledge of: (i) the distances to the other TXs, (ii) its own traffic conditions and (iii) network density.
Therefore, all TXs must have the capability of \emph{sensing} the available signals to then compute \emph{estimations} of the network and traffic conditions.
With these estimations of the external factors in hand, the elements must find solutions that satisfy their own internal and external pressures such as buffer stability and transmit power constraint.
But before going further into implementation issues, which will be our focus later on Section \ref{subsec_imple}, we still need to analyze highly mobile topologies as presented next.
 
\subsubsection{Highly mobile network}
\label{subsubsec_highmob}
Here we consider a network where its elements are highly mobile such that their positions change very fast so properties of Poisson point process can be applied using tools of stochastic geometry (refer to displacement theorem in \cite{Baccelli2009_1}).
We can see such a situation in shopping malls, streets, coffee houses or wherever place with intense flux of people using mobile devices.
We assume that channel gains are a composition of distance-dependent path-loss and fast fading \cite{yacoub_foundations_1993}.
As before, all TXs are subject to the same power constraint imposed by the network.
Then, we can make the following claims based on  \cite{Weber2012,Baccelli2011,Nardelli2012_WCNC_Stability,Nardelli2012_submitted_Marios,Nardelli2012_accepted_Mariam} considering that every TX knows the distance to its own RX as well as the spatial density of interfering nodes.

\begin{claim}[achievable rate]
\label{claim_achievable_highmob}
Every TX can individually code its messages so that the link outage probability can be bounded.
\end{claim}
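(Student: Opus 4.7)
The plan is to combine the Poisson interferer model furnished by the mobility assumption with the IAN achievable-rate characterization of Theorem \ref{the_IAN_achi} to express the link outage probability as a deterministic function of the coding rate, the TX-RX distance, and the interferer density, and then to invert this function to find an admissible rate. Because every TX knows its own link distance and the spatial density of interferers, any target on the outage probability then translates into a rate that the TX can compute individually.

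More concretely, I would fix a reference link TX$_k$-RX$_k$ at distance $d_k$ and, by the displacement theorem invoked in the statement, treat the interferers as a homogeneous Poisson point process of density $\lambda$. Under a composite path-loss and fast-fading channel subject to the common power constraint $Q$, the aggregate interference $I_k = \sum_{j \in \mathbb{A}\setminus\{k\}} P_{kj}$ at RX$_k$ becomes a shot-noise process whose distribution is determined only by $\lambda$, $Q$, and the path-loss and fading laws, as recorded in \cite{Baccelli2009_1,Haenggi,Cardieri2010}. Combining Definition \ref{def_out} with Theorem \ref{the_IAN_achi}, the IAN outage event for link $k$ reduces to $\{P_{kk}/(1+I_k) < 2^{R_k}-1\}$, so the link outage probability admits the expression $p_{\mathrm{out}}(R_k;d_k,\lambda) = \Pr[\,P_{kk}/(1+I_k) < 2^{R_k}-1\,]$.

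A standard monotone-convergence argument then shows that $p_{\mathrm{out}}(\cdot;d_k,\lambda)$ is continuous and non-decreasing in $R_k$, and that $p_{\mathrm{out}}(R_k;d_k,\lambda) \to 0$ as $R_k \to 0^+$ for the non-degenerate fading models of interest. Hence, for any prescribed target $\epsilon \in (0,1)$, there exists a largest admissible rate $R_k^{\star}(\epsilon;d_k,\lambda) > 0$ such that every $R_k \le R_k^{\star}$ satisfies the outage bound $p_{\mathrm{out}} \le \epsilon$. Since $d_k$ and $\lambda$ are locally available at TX$_k$ by assumption, this rate is computable by the transmitter alone, which is precisely what the claim asserts.

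The main obstacle I anticipate is that closed-form expressions for $p_{\mathrm{out}}$ exist only for a restricted family of fading models, most notably Rayleigh fading, where the Laplace transform of the PPP shot noise takes the explicit form exploited in \cite{WebYan2005,Weber2012,Nardelli2012_WCNC_Stability}. For more general fading laws one must resort to numerical inversion, dominant-interferer bounds, or the transmission-capacity framework of \cite{WebYan2005}; the monotonicity argument above, however, only relies on qualitative properties of the SINR distribution and therefore still guarantees the \emph{existence} and local computability of an $R_k$ meeting the outage constraint, which is all the claim requires.
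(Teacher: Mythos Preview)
Your argument is sound and, in fact, considerably more detailed than what the paper itself offers. The paper explicitly states that it does \emph{not} present formal proofs of the claims in Section~\ref{subsec_cases}; Claim~\ref{claim_achievable_highmob} is supported only by the one-line remark that ``it is possible to bound the link outage probability by setting a proper coding rate'' together with blanket citations to \cite{Weber2012,Baccelli2011,Nardelli2012_WCNC_Stability,Nardelli2012_submitted_Marios,Nardelli2012_accepted_Mariam}. Your proposal is precisely the argument those references contain: model interferers as a PPP via the displacement theorem, write the IAN outage probability as a function of $(R_k,d_k,\lambda)$ through the shot-noise Laplace transform, and exploit monotonicity in $R_k$ to invert the outage constraint. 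So there is no discrepancy in approach---you have simply supplied the justification that the paper chose to outsource to the literature.

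One minor point worth tightening: the limit $p_{\mathrm{out}}(R_k;d_k,\lambda)\to 0$ as $R_k\to 0^{+}$ requires that the aggregate interference $I_k$ be almost surely finite, which for a homogeneous PPP holds only when the path-loss exponent exceeds the ambient dimension. This is the standing assumption throughout the cited works, so it is harmless here, but stating it explicitly would make your monotone-convergence step airtight.
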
 

\textbf{Remark:} Due to the high mobility of the nodes, the network is not able to work within its capacity region. Yet, it is possible to bound the link outage probability by setting a proper coding rate.

\begin{claim}[decoding rules]
\label{claim_dec_highmob}
The OPT decoder is not a feasible option for highly mobile networks, while IAN is still functional under such conditions. 
\end{claim}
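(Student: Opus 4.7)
The plan is to argue Claim \ref{claim_dec_highmob} in two separate parts: the infeasibility of OPT and the continued functionality of IAN, in each case tracing the requirement back to what a receiver can realistically know about the network state when the topology evolves on a Poisson point process whose realizations change on the order of, or faster than, a coding block. As the author has warned, the argument is not a formal theorem but a justification assembled from Theorems~\ref{the_IAN_achi} and \ref{the_OPT_achi} and from the cited stochastic-geometry results; the proposal below just makes the linkage explicit.

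For the infeasibility of OPT, I would start from the achievability condition in \eqref{eq_OPT_inequality}. The right-hand side depends on a subset $\mathbb{A}^*_k$ of transmitters whose messages are jointly decoded at RX$_k$, and on the individual rates $R_i$ used by the TXs in $\mathbb{A}^*_k \setminus \{k\}$. Consequently OPT presupposes that RX$_k$ (i) identifies which TXs fall in $\mathbb{A}^*_k$, (ii) holds the codebooks of these TXs, and (iii) is synchronized with their rate assignments so that the joint condition is met. Under the high-mobility assumption the set of relevant interferers is a Poisson point process whose realization decorrelates on timescales comparable to the block length $n$, so the identity of the members of $\mathbb{A}^*_k$, and hence the codebooks that would be required, change too quickly for any standing exchange of side-information; the signalling overhead needed to keep (i)--(iii) current would grow with mobility and eat into the available resource. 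I would formalize this as a negative control-overhead bound: any protocol that sustains (i)--(iii) must consume signalling bandwidth proportional to the rate of topology change, which under Poisson displacement is unbounded in the limit.

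For the functionality of IAN, I would invoke Theorem~\ref{the_IAN_achi} together with Claim~\ref{claim_achievable_highmob}. The right-hand side of \eqref{eq_IAN_inequality} depends only on $P_{kk}$ and on the aggregate interference $\sum_{j \neq k} P_{kj}$, not on the interferers' codebooks or rate choices. Under the Poisson and fast-fading assumptions stated before the claim, the distribution of the aggregate interference at RX$_k$ is characterized in closed form by standard stochastic-geometry results (the displacement theorem and the Laplace functional arguments used in \cite{Weber2012,Baccelli2009_1}). Hence TX$_k$, knowing only its own link distance and the density of active interferers, can choose a coding rate $R_k$ that guarantees a prescribed bound on the outage probability in Definition~\ref{def_out}, exactly in the sense of Claim~\ref{claim_achievable_highmob}. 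This shows that the information IAN needs is statistical, not per-realization, and therefore survives high mobility.

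The main obstacle will be giving a clean meaning to ``not feasible'' so that the negative half of the claim is more than a practitioner's intuition. My preferred route is to fix a budget of signalling resources and show that under any bounded budget the probability that RX$_k$ possesses valid codebooks for every TX in $\mathbb{A}^*_k$ at decoding time goes to zero as the mobility rate grows, which together with \eqref{eq_OPT_inequality} implies that OPT's effective achievable rate collapses to the IAN rate (or worse, when mismatched joint decoding is attempted). The IAN half is, by contrast, largely bookkeeping on top of the cited aggregate-interference characterizations, so I expect the asymmetric hardness of these two halves to be the principal issue to navigate.
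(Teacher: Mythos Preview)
Your proposal is correct and rests on the same core idea as the paper: OPT requires per-interferer codebook knowledge and a stable decoding subset $\mathbb{A}^*_k$, both of which high mobility destroys, whereas IAN needs only aggregate-interference statistics. The paper, however, does not attempt anything like your proposed formal control-overhead bound or the explicit stochastic-geometry treatment of the IAN half; its entire justification is the short remark following the claim, which simply notes that codebook knowledge is a necessary condition for OPT, that fast topology changes make such knowledge infeasible, and that the jointly-decoded subset may itself change during a transmission. So your plan is faithful to the paper's reasoning but considerably more ambitious in rigor than what the paper actually delivers; if you want to match the paper's level, the informal feasibility argument you sketch in your first OPT paragraph already suffices.
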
 

\textbf{Remark:} A necessary condition to the OPT decoding rule is the knowledge of the coding book of other TXs. 
As the network topology changes very fast, it is not feasible to have such a knowledge.
Moreover, when TXs move during the transmissions, the subset of messages that are jointly decoded and treated as noise may also change.
All in all, it is very difficult, or even impossible, to employ OPT decoders in highly mobile scenarios.

\begin{claim}[access protocol and retransmissions] The design choices conditioned by the network conditions are stated below.
\label{claim_access_highmobi}
\begin{itemize}

\item A limited number of retransmissions for packets detected in error improves the system performance regardless of the medium access procedure. 

\item Distributed time-division is not a feasible option for highly mobile networks.

\item If the network is sparsely populated, then random access protocols that employ carrier sensing (e.g. CSMA) tend to outperform slotted ALOHA schemes.

\item If the network is densely populated and the traffic conditions are not heavy, the synchronous transmissions required by the slotted ALOHA  improve both individual link and spatial throughputs.

\item If the traffic conditions are heavy and the network is densely populated, the TXs should find the best combination of access probability under slotted ALOHA protocol, coding rate and number of allowed retransmissions per packet such that every TX is able to maintain the stability of its own queue system and have a bounded PLR. 
\end{itemize}
\end{claim}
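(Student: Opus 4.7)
The plan is to establish each of the five bullet points separately, using the stochastic-geometric framework for highly mobile networks (active transmitters modeled as an i.i.d.\ PPP in each slot via the displacement theorem), the IAN achievable-rate inequality of Theorem \ref{the_IAN_achi}, the outage event definition, the effective and spatial throughput definitions of Section \ref{subsec_metrics}, and the queue-stability criterion of Definition \ref{def_stability}.

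For the first bullet, the argument is: for any fixed coding rate $R_k$ and any fixed medium-access rule, permitting up to $K_{\max}$ retransmissions of each packet detected in error yields a per-packet failure probability that decays geometrically in $K_{\max}$, since outages in successive slots are independent under the high-mobility displacement assumption. Consequently the effective link throughput strictly increases in $K_{\max}$ until the service rate ceases to dominate the arrival rate in the Loynes sense, at which point a finite $K_{\max}$ remains optimal; this simultaneously justifies both the qualitative \emph{improves} and the \emph{limited} qualifier. For the second bullet I would argue by contradiction: any distributed time-division schedule must assign non-conflicting slots to spatially close transmitters, which requires each transmitter to maintain an up-to-date list of its significant interferers across a time scale longer than the duration of the scheduling-negotiation protocol; under high mobility the interferer set changes faster than any such negotiation cycle can close, so schedule coherence cannot be sustained without centralized reassignment, contradicting the decentralization premise.

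The third and fourth bullets together amount to a crossover comparison between CSMA and slotted ALOHA as a function of the PPP intensity $\lambda$. I would model slotted ALOHA as an independent thinning of intensity $\lambda p$ and CSMA as a Mat\'ern-type hard-core thinning induced by the carrier-sense exclusion region, then compute the expected effective rate per link from the IAN outage formula and integrate over the PPP via Campbell's formula and the Laplace functional of the aggregate interference. In the sparse regime the CSMA exclusion is rarely triggered, so CSMA retains almost all transmitters while suppressing the few dangerous collisions and outperforms the uniform thinning of ALOHA; in the dense regime CSMA's exclusion becomes so aggressive that synchronized slots without exclusion win, which is exactly the crossover the two bullets assert. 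For the fifth bullet I would cast the design problem as
\begin{equation}
\max_{p,\, R,\, K_{\max}} \; \mathcal{S}_{[t_1,t_2)} \quad \text{subject to Def.~\ref{def_stability} and } \Pr[\text{packet loss}] \leq \epsilon,
\end{equation}
and argue existence of a feasible interior optimum by continuity of the objective and compactness of the feasible set in $(p, R, K_{\max})$, invoking the explicit characterizations of the stability region and PLR from the cited works.

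The hard part will be coupling the spatial outage analysis with the per-node queue dynamics. Outages at a given transmitter across different slots are independent thanks to the PPP redraw, but outages across different transmitters in the same slot remain correlated through the shared interference field, which breaks the i.i.d.\ assumption underlying a naive application of Loynes' criterion. The workaround, as used in \cite{Stamatiou2010,Nardelli2012_WCNC_Stability}, is to exploit the Palm distribution and treat each typical transmitter's service process as marginally Bernoulli with the spatially-averaged success probability; verifying that this mean-field decoupling remains tight after the joint optimization of $(p, R, K_{\max})$, and simultaneously for both CSMA and ALOHA access, is the step I expect to demand the most care.
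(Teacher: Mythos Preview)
Your proposal is considerably more ambitious than what the paper itself does. The paper explicitly states, just before introducing the claims in Section~\ref{subsec_cases}, that ``we are not going to present formal proofs of the claims, which in fact are conclusions that one can be drawn from the following works available in the literature \cite{Weber2012,Baccelli2011,Nardelli2012_WCNC_Spatial,Nardelli2012_WCNC_Stability,Nardelli2012_submitted_Marios,Nardelli2012_accepted_Mariam}.'' In other words, Claim~\ref{claim_access_highmobi} is asserted without proof; it functions as a design guideline extracted from results proved elsewhere, not as a theorem the paper establishes.

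Your sketch is therefore not merely a different route but a genuine attempt to supply what the paper omits. The ingredients you invoke---PPP redraw via the displacement theorem for slot-to-slot independence, independent thinning for slotted ALOHA versus Mat\'ern-type hard-core thinning for CSMA, Laplace-functional computation of the IAN outage probability, and Loynes' criterion applied to the marginally Bernoulli service process at the typical transmitter under Palm---are precisely the machinery used in the references the paper cites, so your plan is well aligned with those sources. The crossover argument for bullets three and four and the geometric-decay argument for bullet one are standard in that literature; the infeasibility argument for distributed time-division in bullet two is informal in the same way the paper's own remarks are. The difficulty you flag at the end, namely the spatial correlation of outages across transmitters in a common slot and the resulting mean-field decoupling, is real and is exactly the technical point handled in \cite{Stamatiou2010,Nardelli2012_WCNC_Stability}; the paper does not confront it because it never attempts a proof.
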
 

Here once again one can see that every network element must be able to sense and estimate the network and traffic conditions, using them to assess the feasibility of possible design setting based on their internal and external pressures.
Using these claims and the ones proposed in the previous section, we will discuss in the following how a robust adaptive algorithm that allows for self-organization should be designed for interference networks under different conditions of density, traffic and mobility.

\subsection{Implementation discussion}
\label{subsec_imple}
In the previous subsections we stated several claims describing how  the design of interference networks should be for two extreme mobility scenarios based on theoretical results found in the literature.
Such works, however, lack of a deeper understanding of more practical deployments in which the network condition is dynamic, i.e. during some time the network can be high mobile while during some other periods it can be quasi-static or neither one.
We can visualize this condition in, for instance, smart homes where there are equipments communicating that are static and there might be people using other communication devices as well.
In there, sometimes there is no one around or people are sleeping and then only machines communicate.
In contrast, during other periods they are awake using their wireless devices, walking around and generating traffic.

Our aim here is to guide the design of a possible adaptive algorithm to improve the network efficiency while the constraints associated to internal and external pressures can be satisfied. 
In the following we indicate the basis of such an algorithm, which should be employed by all elements of the network.
%

\subsubsection{Variable to be optimized}
In \cite{Nardelli2012_submitted_Marios} the authors indicated that if all links optimize their own effective throughput given by Definition \ref{def_effthrough} in a selfish way, then the common resource might be overused, leading to  similar effects to the tragedy of the commons problem \cite{Hardin1968}.
There, they also showed that if all links optimize the network spatial throughput given by Definition \ref{def_spathrough}, then they can reach link effective throughputs at least as high as in the selfish optimization case.
The reason of this is the following.
While the former procedure optimizes the link performance for a \emph{given} network state which is in turn considered independent of its own decision, the latter assumes that the designing choice of every TX affects the others and then also modifies the actual network state.
Therefore, by optimizing the network spatial throughput assuming that all links proceed in the same way, the setting that is the optimal for the network is also the optimal for each individual link.
Motivated by this result, we argue that the spatial throughput is the measure to be optimized by the most efficient algorithms designed for interference networks since it provides incentives to collaborative behavior, avoiding then prisoner dilemmas kind of loses \cite{Nowak2012}.

\subsubsection{Variables required to proceed with the optimization}
Now that the maximization target is defined, we should determine what each network element should know to optimize its performance.
First of all, they should assess their own internal pressures as their basic constraints.
For example, every TX should infer its arrival process, which is an external factor, to determine the conditions that guarantee its own queue stability and therefore it will be able to determine the feasibility of possible solutions.

Then they should also estimate the mobility pattern of the network to evaluate how its topology changes\footnote{If the node is moving itself, it will see a mobile network around it and will decide to design its communication system accordingly.}.
Once the mobility pattern is identified, the nodes should quantify the network density and/or the distances from each other using procedures as the ones presented in \cite{Onur2012} and \cite{Yang2012}, respectively.
Clearly, it is important that all nodes  assume in their calculations that the external pressures suffered by the other nodes are the same (e.g. the same power constraint).
Given that such elements locally estimate those information about the network state, they can start optimizing their own design setting based on the claims previously stated.
In other words, given the internal and external pressures, the mobility pattern and the knowledge of the distances and/or the network spatial density, each node will set, for example, the coding rate employed, the decoding rule, the medium access procedure and the maximum number of packet retransmissions that jointly maximize the network spatial throughput, which is locally computed from the estimated network state, while all constraints are satisfied.
If the link starts facing problems in satisfying its own constraints or after a given period of time, the procedure should be repeated to adapt its setting to the new state of the network. 

As one can notice, this algorithm mimics how humans solve the chatting in a party problem defined in Section \ref{SubSec_Informal}.
It is also worth saying that we choose here to not go into the specificities of algorithms or signal processing schemes; rather we prefer to provide more general guidelines that apply a complex system view of the interference network problem.

\section{Final Discussion and Future Works}
\label{sec_Concl}
Throughout this paper, the problem of interference-limited wireless networks was revisited from a complex system perspective.
Specifically we identified how the performance of such a class of networks can be determined by internal and external pressures as well as by external factors.
From this characterization, we discussed in general lines the implementation of adaptive algorithms in order to design functional interference networks that might be subject to diverse conditions of mobility and traffic.

It is important to mention that implementing a system that actually employs  the ideas presented here requires the specific knowledge of the application scenario --  what is required by a smart home is different from what is required by a industrial plant and the algorithm should reflect it.
In other words, there is no panacea \cite{Ostrom2007}.
Nevertheless, one can verify that the capabilities needed by  different applications that build interference network structures are very similar, regardless of such specificities.
All elements of a functional interference network must be able to sense, store messages, process data and optimize mathematical functions.
In addition, every element should always be aware that its own action will affect and interfere with the overall network efficiency, going in similar lines to what Mills pointed out in his fundamental work in the field of sociology about the so-called \emph{sociological imagination} \cite{Mills1959}.

In this context, a straightforward future work is to study real applications of interference networks such as smart grids or smart homes and then apply the ideas presented in this paper to design actual adaptive solutions that employ a complex system thinking.

\bibliographystyle{IEEEtran}
\bibliography{ref_abbrev}

\end{document}